\documentclass[12pt]{article}
\usepackage{graphicx} 

\usepackage[title]{appendix}
\usepackage[T1]{fontenc}
\usepackage{ae,aecompl}
\usepackage{amssymb}
\usepackage{amsmath}
\usepackage{subfigure}
\usepackage[round,authoryear]{natbib}
\usepackage{color}
\usepackage{array}
\usepackage{rotating}
\usepackage{colortbl}
\usepackage{tikz}
\usepackage{euscript,amsfonts}
\usetikzlibrary{patterns}

\usepackage[normalem]{ulem}

\usepackage[margin=1in]{geometry}

\definecolor{webgreen}{rgb}{0,0.4,0}
\definecolor{webbrown}{rgb}{0.6,0,0}
\definecolor{purple}{rgb}{0.5,0,0.25}
\definecolor{darkblue}{rgb}{0,0,0.7}
\definecolor{darkred}{rgb}{0.7,0,0}
\definecolor{darkgreen}{rgb}{0,0.7,0}

\usepackage[pdfborder=false]{hyperref}
\hypersetup{colorlinks,citecolor=darkred,filecolor=black,linkcolor=darkblue,urlcolor=webgreen,pdfpagemode=none,pdfstartview=FitH}

\newtheorem{example}{Example}
\newtheorem{theorem}{Theorem}
\newtheorem{lemma}{Lemma}

{\vspace{.1in}}

\newenvironment{proof}{{\bf Proof:\/}}{{\hfill$\blacksquare$\vspace{.1in}}}

 \newcommand{\calsf}{{\cal S}^f}

  \newcommand{\rarely}[1]{}

\allowdisplaybreaks

\title{{\bf Strategy-proof and Efficient Job Matching with Participation Constraints}}
\author{Sushil Bikhchandani and Debasis Mishra~\thanks{Bikhchandani: Anderson School at UCLA, \texttt{sbikhcha@anderson.ucla.edu}. Mishra: Indian Statistical Institute, Delhi, \texttt{dmishra@isid.ac.in}.}}

\makeatother

\begin{document}

\allowdisplaybreaks 

\maketitle 

\begin{abstract}
We study the design of strategy-proof and efficient mechanisms satisfying participation constraints in the job-matching problem. Each firm can hire multiple workers and each worker can be employed at only one firm. While firm utilities over subsets of workers are common knowledge, worker disutilities for working at each firm are private information. The VCG mechanism is the unique mechanism that is strategy-proof, efficient, and individually rational for workers; however, it may not be individual rational for firms. We show that the VCG mechanism is individually rational for firms if and only if firm utilities satisfy a condition called weak substitutes. We then strengthen participation constraints of firms to {\sl strong individual rationality}, which requires that each firm has no incentive to fire some of the workers assigned to it. The VCG mechanism is strongly individual rational if and only if firm utilities satisfy submodularity. 

\end{abstract}

\noindent\textsc{Keywords:} job-matching, stability, VCG mechanism, participation constraints.

\noindent\textsc{JEL Classification:} D82, L51 

\bigskip{}

\thispagestyle{empty}

\newpage{}



\clearpage\pagenumbering{arabic}

\section{Introduction}

We consider the job-matching model of \cite{KC82} where each firm can hire several workers, but each worker can be employed at only one firm. Agents (firms and workers) have quasilinear utility, and money transfers between agents are allowed. The utility functions of firms (over subsets of workers) are common knowledge, while workers' (dis)utilities for firms are private information. Thus, mechanisms that induce truth-telling by workers are of interest.

While the literature has focused on efficient, strategy-proof implementation via Walrasian equilibrium, we drop the requirement that the salaries paid to workers be Walrasian prices. We are interested in efficient, strategy-proof mechanisms in which salaries offered by firms to workers induce truth-telling {\sl and} ensure individual rationality for firms.\footnote{If implementation is via Walrasian equilibrium, then individual rationality for firms is automatically satisfied as each firm selects a subset of workers that  maximizes the firm's payoff at the equilibrium price. As discussed later, implementing a strategy-proof and efficient mechanism via Walrasian equilibrium requires strong conditions on the utilities functions of firms.}

From \cite{H79} we know that the requirement of efficiency and strategy-proofness implies that such a mechanism must be a VCG scheme.\footnote{The domain of worker types is assumed to be convex. Therefore, Holmstr\"om's result applies.} However, it is well known that such a scheme may not be individually rational for firms\footnote{Individual rationality for workers is implied by strategy-proofness and efficiency.} as the following example illustrates:

\begin{example}\label{ex:one}
{\rm There is one firm and two workers, $w_1$ and $w_2$. The workers' disutilities of working for the firm are $d_1$ and $d_2$, where $d_1+d_2<10$. The disutility $d_i$ is  worker~$i$'s private information. The firm has a utility of 10 if it employs both workers and 0 if it employs no worker or one worker. The firm's VCG payment to worker~$i$ is $10-d_j$. But at the VCG outcome, the firm incurs a loss of $10-d_1-d_2$, violating individual rationality.}\hfill$\square$
\end{example}

Thus, the total payments required to induce truthful revelation may
exceed the utility that the firm derives from its the assigned workers, making the
mechanism infeasible. In this example, it is the complementarity of the firm's utility over workers that prevents individual rationality (IR). We show that a VCG mechanism is IR if and only if firms' utility functions satisfy a weak substitutes condition. This condition is weaker than submodularity. 

A stronger individual rationality requirement is that no firm should be better off by firing some of the workers assigned to it by the mechanism. We show that a VCG mechanism is strongly individually rational (SIR) if and only if firms' utility functions are submodular (which is equivalent to a strong-substitutes condition).

SIR ensures that no one-agent coalition blocks the VCG outcome. An even stricter desiderata is that no multi-agent coalition blocks the VCG outcome. That is, the VCG outcome should be in the core (which is equivalent to stability). \cite{HKK26} show that the worker-optimal stable mechanism is strategy-proof for workers if utility functions of firms satisfy gross substitutes (see their Proposition~1) condition of \cite{KC82}. \cite{BIK25} show that the gross substitutes condition is necessary for implementing the VCG mechanism as a Walrasian equilibrium. 

Thus, the stability of the VCG outcome is assured only under gross substitutes utility. As gross substitutes is a much more restrictive assumption than submodularity, this may be viewed as a negative result.\footnote{\cite{LLN06} show that the set of gross substitute valuations is of zero measure within the set of submodular valuations.} If firms are unfettered in their ability to block outcomes of a mechanism, then an efficient and strategy-proof mechanism will not be stable for most specifications of firm utilities.

Participation constraints of firms are arguably more important than
stability in settings where firms are subsidiaries of a centralized
organization. The headquarters can impose constraints on blocking behavior, but it cannot compel a subsidiary to absorb financial losses. For instance, the head of a school district may stop school A from hiring teachers currently at school B without the latter's consent.\footnote{Before initiating a within-district transfer, a teacher in the {\it Los Angeles Unified School District} must first obtain approval from their current principal. See \texttt{https://shorturl.at/mBtEd}.}
In such cases, the failure of a mechanism's outcome to be in the core is not unduly problematic. 

Nevertheless, each school must balance its budget and allocate resources wisely.  Therefore, individual rationality and strong individual rationality are desirable properties of any mechanism for allocating teachers to schools, or workers to sub-units of an organization, and they can be achieved under submodular utilities.

The set of stable matchings is a strict subset of the set of SIR matchings which is a strict subset of the set of IR matchings. Consequently, as the matching requirements are progressively weakened, the necessary and sufficient conditions on firm utility functions are similarly relaxed. Namely, a gross substitutes function satisfies submodularity which in turn satisfies weak substitutes.

Firms are long-lived, so their preferences are known. In contrast, less is known about the preferences of workers, especially if they have a short track record, such as fresh graduates. Hence, the assumption that worker disutilities are private information while firm utilities are common knowledge. In addition to \cite{HKK26} and \cite{BIK25}, there are other matching models with one-sided incomplete information where it is assumed that the preferences of organizations are known but those of individuals are their private information: \cite{LMPS14}, where firm types are known but worker types are not, and \cite{CCO10}, who study a college-student matching model in which the quality of each student is privately known.

A word about our focus on efficiency. If the outcome of a mechanism is in the core, it must be efficient. If the outcome is not in the core then, as noted above, such a mechanism would be useful to an organization that allocates workers to its subsidiaries, as it can constrain blocking. The organization would be interested in an efficient and incentive-compatible assignment of workers to its subsidiaries.

This is not the first paper to use VCG mechanisms in matching. Apart from \cite{HKK26}, which shows that VCG mechanisms induce efficient investments, \cite{Ut20} examines VCG-like mechanisms in one-to-one matching with interdependent valuations.


We note that the job-matching model is similar to the combinatorial auctions model with key differences. First, in auction models, buyers (firms) rather then sellers (workers) have private information about their utilities. Thus, auctions are strategy proof for buyers while the job-matching mechanisms are strategy proof for sellers. Second, there is one seller who owns all the objects in the combinatorial auctions model whereas there are many sellers in the job-matching model, one for each object. This constrains pricing options in the job-matching model as the price of two objects must equal the sum of the prices of the two objects. Third, IR for the non-strategic player (the seller) is trivially satisfied in combinatorial auctions, while, as Example~\ref{ex:one} illustrates, that is not the case for the job-matching model. Participation constraints for firms are the main focus of this paper.

The rest of the paper is organized as follows. After presenting the model in Section~\ref{se:mdl}, we  describe efficient and strategy-proof implementation in Section~\ref{se:esp}. Theorems~\ref{th:vcgir} and \ref{th:vcgsir} in Section~\ref{se:ir} give the necessary and sufficient condition for the VCG mechanism to be IR and SIR, respectively. We conclude in Section~\ref{se:di} with a discussion of stability and of the differences between job-matching models and  combinatorial auctions.

\section{The model}\label{se:mdl}

Let $F$ be a set of $m$ firms and $W$ a set of $n$ workers.  Each firm~$f$'s utility for a subset of workers $S\subseteq W$ is $u_f(S)$. Worker $w$'s disutility (type) from being matched with firm $f$ is $u_w(f)$.\footnote{We assume  that $u_w(f)\ge 0$. However, it is easy to accommodate the case that workers derive positive utility from working, i.e., $u_w(f)< 0$.} Firm utilities and worker disutilities are normalized so that unmatched firms and unmatched workers have zero utility, i.e., $u_f(\emptyset)= 0$, $\forall f$ and $u_w(\emptyset)=0$, $\forall w$.

We assume that each firm's utility function is {\bf weakly increasing}:
\begin{align*}
    u_f(S) \le u_f(T)~\qquad~\forall~S \subset T \subset W,~\forall~f \in F.
\end{align*}
This accommodates capacity constraints on the number of workers a firm can hire. For example, if firm $f$ has a capacity of $q$ workers that it can hire, and $|S|>q$, then $u_f(S)$ is interpreted as the maximum utility the firm obtains by hiring up to $q$ workers from~$S$.

For each worker~$w$, the domain of types (vector of disutilities), $u_w(f),\, f\in F$ is ${\cal D}\subseteq \Re_+^m$.\footnote{For notational simplicity, we assume that the domain is the same for all workers.} A worker's type vector is her private information while firm utility functions are known. We make the following assumption:

\medskip
\noindent
{\sc Domain Assumption:} The set of  worker types $\cal D$ is convex and $[0,\overline u]^m\subset {\cal D}$, where $\overline u=\max_f u_f(W)$ is the maximum utility that any firm can derive from hiring all the workers.

\medskip
Convexity of the domain implies that there is a unique efficient, strategy-proof mechanism (see \cite{H79}). The richness assumption on the domain ensures that for any firm~$f$ and subset of workers, $S$, there exists a type profile of all workers such that $f$ is allocated $S$ in any efficient assignment. This plays a role in obtaining necessary conditions in Theorems~\ref{th:vcgir} and \ref{th:vcgsir}.

The collection $\{F,\,u_f\ \forall f;\ W,\,u_w\in {\cal D} \ \forall w\}$ defines a job-matching market with incomplete information about worker types.

Let $\mu$ be a many-to-one matching between firms and workers. That is,  $\mu$ is a {\bf matching function} from $F\cup W$ to $2^W\cup F\cup\emptyset $ such that $\mu(f)\subseteq W$, $\mu(w)\in F \cup \emptyset $, and $\mu(w)=f$ if and only if $w\in \mu(f)$. If $w$ is unmatched then $\mu(w)=\emptyset$ and if no worker is matched to firm~$f$ then $\mu(f)=\emptyset$.  

Let ${\cal M}$ be the set of (all) matchings and  
${\cal M}^{-w}$ be the set of matchings in which $\mu(w)=\emptyset$.

The type reported by worker~$w$ is a row vector ${\bf\hat u_w}=(\hat u_w(f),\, f \in F)$ and the type profile reported by all workers is a $n \times m$ matrix $\mathbf{\hat u}= ({\bf\hat u_w})$.
Define\footnote{$V_f$ depends only on elements in column $f$ in $\bf \hat u$.}
\begin{align*}
    V_f(S;\, {\bf \hat u}) &:= \max_{T\subseteq S} \Big[u_f(T)-\sum_{w\in T}\hat u_w(f)\Big]
\end{align*}
The value $V_f(S;\, {\bf \hat u})$ represents the maximum possible surplus (at the reported types~$\bf \hat u$) generated by firm $f$ and workers in $S$. Importantly, this maximum surplus can be achieved by matching $f$ to a strict subset of $S$.\footnote{For instance, if the disutility, $u_w(f)$, of worker $w\in S$ is greater than the firm's marginal utility for the worker, $u_f(S)-u_f(S\setminus w)$, then $S\setminus w$ creates a greater surplus than $S$.} Clearly, if $T\subset S$ then $V_f(T;\, {\bf \hat u}) \le V_f(S;\, {\bf \hat u})$ as a larger pool of workers to select from cannot make a firm worse off.

For each firm $f$, let
\begin{align*}
    {\cal S}^f(\bf\hat u) &:= \Big\{S \subseteq W: S\in \arg\max_{T\subseteq S} \big[u_f(T)-\sum_{w\in T}\hat u_w(f)\big] \Big\}
\end{align*}
For every $S \in {\cal S}^f(\bf\hat u)$,  the value $V_f(S;\, {\bf \hat u})$ can be achieved by matching all workers in $S$ to $f$. By assumption, $\emptyset \in \calsf$. The structure of ${\cal S}^f$ plays an important role in the analysis.

The {\bf total surplus} generated by a matching $\mu$ is
\begin{align}\nonumber
  \sum_{f \in F} V_f(\mu(f);\, {\bf \hat u}) & = \sum_{f\in F} \max_{T\subseteq \mu(f)} \Big[u_f(T)-\sum_{w\in T}\hat u_w(f)\Big] \\\label{eq:sf-stuff}
  &\ge  \sum_{f \in F} \Big[u_f(\mu(f))-   \sum_{w \in \mu(f)} \hat u_w(f)\Big] \\\nonumber
  & = \sum_{f \in F} u_f(\mu(f))-   \sum_{w \in W} \hat u_w(\mu(w))
 \end{align}
where the inequality may be replaced by an equality if and only if $\mu(f)\in\calsf$ for all~$f$.

The {\bf maximum total surplus} at $\bf\hat u$ and the efficient matching are denoted by:
  \begin{align*}
  V(W;\, {\bf \hat u}) & := \max_{\mu\in {\cal M}} \sum_{f \in F} V_f(\mu(f);\, {\bf \hat u}) \\
 \mu^* ({\bf \hat u}) &  \in \arg\max_{\mu\in {\cal M}} \sum_{f \in F} V_f(\mu(f);\, {\bf \hat u}) 
  \end{align*}
The matching $\mu^*({\bf \hat u})$ is {\bf efficient} (at ${\bf \hat u}$). 
We assume without loss of generality that for every firm $f$, its set of matched workers $\mu^*(f;\, \hat{\mathbf{u}})$ satisfies $\mu^*(f;\, \hat{\mathbf{u}}) \in {\cal S}^f({\bf \hat{u}})$. That is, 
\begin{align*}
        \mu^*(f;\, \hat{\mathbf{u}}) & \in \arg \max_{T\subseteq \mu^*(f;\, \hat{\mathbf{u}})} \big[u_f(T)-\sum_{w\in T}\hat u_w(f)\big]
\end{align*}
To see why this is without loss of generality, suppose that $\mu^*(f;\, \hat{\mathbf{u}}) \not\in {\cal S}^f({\bf \hat{u}})$ for some firm $f$. That is, there exists $T\subsetneq\mu^*({f;\,\bf \hat u})$ such that 
\begin{align*}
V_f(\mu^*({f;\,\bf \hat u}),{\bf \hat{u}}) = u_f(T) - \sum_{w \in T}\hat{u}_w(f) > u_f(\mu^*({f;\,\bf \hat u})) - \sum_{w \in \mu^*({f;\,\bf \hat u})}\hat{u}_w(f)
\end{align*}
Thus, we have $V_f(\mu^*({f;\,\bf \hat u}),{\bf \hat{u}}) = V_f(T,{\bf \hat{u}})$. Hence, assigning the subset of workers $T$ to $f$ (and keeping unchanged the workers assigned to other firms) is another efficient matching where $f$ is now assigned a subset in $\mathcal{S}^f({\bf \hat{u}})$.

Let $V(W\backslash w;\, {\bf \hat u_{-w}})$ be the maximum total surplus if worker~$w$ is excluded:\footnote{The notation reflects the fact that $V(W\backslash w;\,\cdot)$ does not depend on ${\bf \hat u_w}$}
\begin{align*}
 V(W\backslash w;\, {\bf \hat u_{-w}})  &:= \max_{\mu\in {\cal M}^{-w} }\sum_{f \in F} V_f(\mu(f);\, {\bf \hat u_{-w}})
 \end{align*}

\section{Efficient and Strategy-proof Implementation}\label{se:esp}

As already noted, the utility functions of firms are known, while the (dis)utility functions of workers are not known to the mechanism designer. We are interested in mechanisms that are efficient and strategy proof for workers. 

A {\bf mechanism} is a pair $(g,p)$, such that 
\begin{align*}
    g: {\cal D}^n \rightarrow {\cal M}~~~~~\textrm{and}~~~~~p: {\cal D}^n \rightarrow \Re_+^n
\end{align*}
For every profile of types, $g$ selects a matching and $p$ selects payments (non-negative salaries for workers). Worker $w$ is matched to firm $g(w; {\bf \hat u})$ in mechanism $(g,p)$ at the reported profile of types ${\mathbf{ \hat u}} \in {\cal D}^n$. Firm~$f$ is assigned the set of workers $g(f;{\bf \hat u})$. The payment to worker $w$ by firm $g(w;{\bf \hat u})$ is $p(w;{\bf \hat u})$.
If a worker is not assigned to any firm, i.e., $g(w; {\bf \hat u})=\emptyset$, then $p(w;{\bf \hat u})=0$.

A mechanism $(g,p)$ is {\bf individually rational for workers} if for every ${\bf u} \in {\cal D}^n$
    \begin{align*}
        p(w;{\bf u}) - u_w(g(w;{\bf u})) &\ge 0~\qquad \forall~w \in W
    \end{align*}

A mechanism $(g,p)$ is {\bf strategy proof} if for every worker $w$ and for every ${\bf u_{-w}} \in {\cal D}^{n-1}$ and for every ${\bf u_w,\hat u_w} \in {\cal D}$, we have
    \begin{align*}
         p(w;{\bf u_w,u_{-w}}) - u_w(g(w;{\bf u_w,u_{-w}})) &\ge  p(w;{\bf \hat u_w,u_{-w}}) - u_w(g(w;{\bf \hat u_w,u_{-w}} ))
    \end{align*}

The domain assumption on $\cal D$ implies that a worker can (mis)report disutilities that are greater than the worker's value to any firm, which would preclude the worker being hired at an efficient outcome. With such a report, the worker obtains a payoff of zero. Thus, a strategy-proof and efficient mechanism is also individually rational for workers.

    A mechanism $(g,p)$ is {\bf individually rational for firms} if for every ${\bf u} \in {\cal D}^n$
    \begin{align*}
        u_f(g(f;{\bf u})) - \sum_{w \in g(f;{\bf u})}p(w;{\bf u}) &\ge 0~\qquad\forall~f \in F
    \end{align*}

Individual rationality for firms requires that at every type profile of workers, the utility derived by each firm from workers assigned to it is greater than or equal to the  payments the firm makes to the workers assigned to it by the mechanism. A mechanism is {\bf individually rational (IR)} if it is individually rational for workers and  for firms.

An IR mechanism might be susceptible to a unilateral deviation by a firm if the payment to a worker assigned to the firm exceeds the worker's marginal utility to the firm. The next property addresses this issue. 

A mechanism $(g,p)$ is {\bf strongly individually rational for firms} if at every type profile ${\bf u} \in \mathcal{D}^n$,
    \begin{align*}
        u_f(g(f;{\bf u})) - \sum_{w \in g(f;{\bf u})}p(w;{\bf u}) &\ge u_f(S) - \sum_{w \in S}p(w;{\bf u})~\qquad~\forall~S \subseteq g(f;{\bf u}),~\forall~f \in F
    \end{align*}

A mechanism is {\bf strongly individually rational (SIR)} if it is IR  and strongly individually rational for firms. SIR mechanisms are immune to deviations by a single firm or a single worker.\footnote{In matching models, papers refer to SIR as IR~\citep{EO04,H23} and to IR as acceptability~\citep{H23}.}

We are interested in mechanisms that satisfy efficiency.
    A mechanism $(g,p)$ is {\bf efficient} if at every type profile ${\bf u} \in {\cal D}^n$,
    \begin{align*}
        g({\bf u}) \in \arg \max_{\mu \in M} \Big[ \sum_{f \in F} V_f(\mu(f);  {\bf u})\Big]~\qquad~\forall {\bf u} \in {\cal D}^n
    \end{align*}
That is, at every $\bf u$, $g({\bf u})=\mu^*({\bf u})$ where $\mu^*(\cdot)$ is efficient.\footnote{Recall that $\mu^*(\cdot)$ denotes a function that maps worker types to an efficient matching.}

\subsection{The VCG Mechanism}\label{se:vcg}

A VCG mechanism implements an efficient matching. Strategy-proofness and individual rationality for workers is ensured by giving them their marginal product.\footnote{See, for instance, \cite{MWG95} or \cite{Kr02}, for properties of VCG mechanisms. } However, a VCG mechanism need not be   individually rational for firms, as Example~1 illustrates. We provide necessary and sufficient conditions on firm preferences that ensure individual rationality and strong individual rationality.

The {\bf VCG mechanism} implements a matching $\mu^* ({\bf \hat u})$ that is efficient at the reported types ${\bf \hat u}$. The {\bf VCG payment} to worker~$w$ is the payoff of (all) other agents when $w$ participates in the market less the payoff of other agents when $w$ does not participate (in other words, the externality of $w$ on other workers and firms):
  \begin{align}\nonumber
 p^*(w;\, {\bf \hat u}) & :=\bigg[\sum_{f\in F} u_f(\mu^*(f;\, {\bf\hat u} ))-   \sum_{w'\neq w} \hat u_{w'}(\mu^*(w';\, {\bf\hat u})) \bigg]-V(W\backslash w;\, \, {\bf \hat u_{-w}}) \\\label{eq:vcg-pmt}
 	& = V(W; {\bf \hat u})-V(W\backslash w;\, {\bf \hat u_{-w}}) + \hat u_{w}(\mu^*(w;\,  {\bf \hat u})) \\\nonumber
	& \ge 0
  \end{align}
The second equality above follows from our convention that $\mu^*(f;\, {\bf\hat u} )\in\calsf({\bf\hat u})$ for each~$f$.
This payment scheme ensures that each worker's payoff is their marginal product at the reported types as
\begin{align*}
    p^*(w;\, {\bf \hat u})-\hat u_{w}(\mu^*(w;\,  {\bf \hat u})) = V(W; {\bf \hat u})-V(W\backslash w;\, {\bf \hat u_{-w}}) 
\end{align*}
If $\mu^*(w;\,  {\bf\hat u})=\emptyset$ then $ p^*(w; \, {\bf\hat u}) =0$.

The next lemma states a well-known result. For completeness, we give a proof in an Appendix.

\bigskip
\noindent
{\bf Lemma A:}  {\it The VCG mechanism $(\mu^*,p^*)$ is the unique mechanism which is strategy-proof and efficient.}

 \medskip
 
 As the VCG mechanism is strategy-proof, we simplify the notation and drop the dependence on worker reports and write $V_f(\mu^*(f)),\, V(S),\, \calsf, \mu^*$, and $p^*(w)$ instead of $V_f(\mu^*(f;\, {\bf u});\, {\bf  u}),\, V(S;\, {\bf  u})$, $\calsf({\bf  u})$, $\mu^*({\bf  u})$, and $p^*(w;\, {\bf  u})$, respectively. Thus, we write the payment to workers in (\ref{eq:vcg-pmt}) at type profile ${\bf u}$ as
 \begin{align}\label{eq:vcg-pmt2}
 p^*(w)    & = V(W)-V(W\backslash w) + u_{w}(\mu^*(w))~\qquad~\forall~w \in W
 \end{align}

We end the section with a technical lemma about some properties of efficient matching. This lemma is used in the proofs of the main results. Before stating the lemma, we need some notation and a definition.

For any function $h:2^W \rightarrow \Re$, define 
\begin{align*}
    \partial_w [h(S)] := h(S) - h(S \setminus w), \qquad w\in S\\
    \partial_{S'} [h(S)] := h(S) - h(S \setminus S'), \qquad S'\subseteq S
\end{align*}
the {\bf marginal product} of $w$ and $S'$, respectively, to $h(S)$.

\medskip

\begin{lemma}\label{lm:mp-order}
Let $\mu^*$ be an efficient matching. Then for any $f \in F$ and for any $S \subseteq \mu^*(f)$, the following are true:\vspace{-3mm}
\begin{enumerate}
    \item[(i)] $\partial_S[V(W)]  \le \partial_S[V_f(\mu^*(f))]$

    \item[(ii)] Let $\mu^{W\setminus S}$ be an efficient matching of workers $W\setminus S$ to the firms. If 
    \begin{align*}
    \mu^{W\setminus S}(f') = 
    \begin{cases}
        \mu^*(f')\setminus S \mbox{\em \ \ \ if \ \ } f'=f\\
        \mu^*(f')\qquad\ \, \mbox{\em if \ \ } f'\neq f
    \end{cases}
    \end{align*}
    then $\qquad \partial_S[V(W)]  = \partial_S[V_f(\mu^*(f))]$.
\end{enumerate}
\end{lemma}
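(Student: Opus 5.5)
Both parts reduce to comparing $V(W\setminus S)$ with the value of one specific feasible matching of $W\setminus S$. Unpacking the marginal products,
\begin{align*}
\partial_S[V(W)] = V(W) - V(W\setminus S), \qquad \partial_S[V_f(\mu^*(f))] = V_f(\mu^*(f)) - V_f(\mu^*(f)\setminus S),
\end{align*}
where $V(W\setminus S)$ is the maximum total surplus when only the workers in $W\setminus S$ are available, as in the definition of $V(W\backslash w)$. Since $\mu^*$ is efficient, $V(W)=\sum_{f'\in F} V_{f'}(\mu^*(f'))$.

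For part (i), I will construct the matching $\nu$ of the workers $W\setminus S$ that leaves $\mu^*$ unchanged except at $f$. That is, $\nu(f)=\mu^*(f)\setminus S$ and $\nu(f')=\mu^*(f')$ for $f'\neq f$. This is feasible because $S\subseteq\mu^*(f)$, so no worker in $S$ is matched under $\nu$. By optimality of $V(W\setminus S)$,
\begin{align*}
V(W\setminus S) \ge \sum_{f'\neq f} V_{f'}(\mu^*(f')) + V_f(\mu^*(f)\setminus S) = V(W) - V_f(\mu^*(f)) + V_f(\mu^*(f)\setminus S).
\end{align*}
Rearranging gives $V(W)-V(W\setminus S)\le V_f(\mu^*(f))-V_f(\mu^*(f)\setminus S)$, which is (i).

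For part (ii), the hypothesis says that $\nu$ itself is the efficient matching $\mu^{W\setminus S}$. The inequality above therefore holds with equality, $V(W\setminus S)=\sum_{f'}V_{f'}(\nu(f'))$, and the same rearrangement yields the equality claimed in (ii).

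There is no real obstacle. The only point needing care is notational: $V_f$ is the surplus maximized over subsets, so I should check that the total surplus of $\nu$ is exactly $\sum_{f'} V_{f'}(\nu(f'))$. This holds by the definition of $V(\cdot)$ as a maximum of $\sum_{f'} V_{f'}(\mu(f'))$ over matchings, so no $\calsf$ convention is needed. The convention is irrelevant here because both $V(W)$ and $V(W\setminus S)$ are expressed via the $V_{f'}$'s.
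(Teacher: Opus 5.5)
Your proposal is correct and matches the paper's proof essentially step for step: the paper also bounds $V(W\setminus S)$ below by the surplus of the matching that removes $S$ from $f$ and leaves every other firm's assignment unchanged, then rearranges to get (i). It obtains (ii) the same way you do, by noting that the hypothesis makes that matching efficient, so the inequality holds with equality (and, as you observe, the $\calsf$ convention plays no role).
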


\section{Participation Constraints for Firms}\label{se:ir}

 As Example~\ref{ex:one} illustrates, the VCG mechanism need not be individual rational for firms. To investigate conditions that guarantee IR, we need the following definition.

 A function $h:2^W \rightarrow \Re$ satisfies {\bf weak substitutes} on $\mathcal{R} \subseteq 2^W$ if
    \begin{align*}
        h(S) &\ge \sum_{w \in S} \partial_w\big[h(S)\big]~\qquad~\forall~S \in \mathcal{R}
    \end{align*}
When $h$ satisfies weak substitutes on $2^W$, we say that $h$ satisfies weak substitutes and drop the qualifier on $2^W$.

Dividing by $|S|$ on both sides of the above inequality, we see that the condition is equivalent to the requirement that average product of workers in $S$ is larger the average of the marginal products of workers in $S$.
When applied to $V_f$, this condition captures the intuition that workers employed by firm $f$ are better off forming a union rather than bargain individually with $f$.\footnote{The substitutes condition goes back (at least) to \cite{Sh62}. It plays a role in the multi-object auctions literature, where buyers (firms) are substitutes rather than sellers (workers). We discuss connections to this literature later.} 

A necessary and sufficient for VCG to be IR is that workers are weak substitutes for every firm.

\medskip
\begin{theorem}\label{th:vcgir}
The VCG mechanism is IR if and only if $u_f$ satisfies weak substitutes for each $f$.
\end{theorem}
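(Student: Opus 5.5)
The plan is to compute firm $f$'s VCG payoff directly, bound each worker's payment by that worker's marginal product to the firm, and then sum. Fix a type profile and let $S=\mu^*(f)$. By the convention $\mu^*(f)\in\calsf$ we have $V_f(S)=u_f(S)-\sum_{w\in S}u_w(f)$. By (\ref{eq:vcg-pmt2}), $p^*(w)=\partial_w[V(W)]+u_w(f)$ for $w\in S$.

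For sufficiency, apply Lemma~\ref{lm:mp-order}(i) with the singleton $\{w\}\subseteq S$ to get $\partial_w[V(W)]\le \partial_w[V_f(S)]$. Since $V_f(S\setminus w)\ge u_f(S\setminus w)-\sum_{w'\in S\setminus w}u_{w'}(f)$, we get
\begin{align*}
\partial_w[V_f(S)] \le u_f(S)-\sum_{w'\in S}u_{w'}(f)-u_f(S\setminus w)+\sum_{w'\in S\setminus w}u_{w'}(f) = \partial_w[u_f(S)]-u_w(f).
\end{align*}
Hence $p^*(w)\le \partial_w[u_f(S)]$ for every $w\in S$. Summing over $S$ and using weak substitutes gives $\sum_{w\in S}p^*(w)\le\sum_{w\in S}\partial_w[u_f(S)]\le u_f(S)$, which is IR for firms. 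IR for workers was already noted to follow from strategy-proofness and efficiency.

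For necessity, suppose some firm $f$ and set $S$ satisfy $u_f(S)<\sum_{w\in S}\partial_w[u_f(S)]$; note $|S|\ge 2$. Using the richness part of the Domain Assumption, construct the following profile. Each $w\in S$ has $u_w(f)=0$ and $u_w(f')=\overline u$ for $f'\neq f$. Each $w\notin S$ has $u_w(\cdot)=\overline u$ at every firm. Any firm $f'$ employing a nonempty $T$ then generates surplus at most $u_{f'}(T)-|T|\overline u\le 0$, so $V(W)=u_f(S)$. The same reasoning gives $V(W\setminus w)=u_f(S\setminus w)$ for $w\in S$, using monotonicity of $u_f$. If the efficient matching assigns exactly $S$ to $f$, then $p^*(w)=\partial_w[u_f(S)]$ for each $w\in S$. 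Firm $f$'s payoff is therefore $u_f(S)-\sum_{w\in S}\partial_w[u_f(S)]<0$, violating IR.

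The main obstacle is tie-breaking: the efficient matching $\mu^*$ need not assign all of $S$ to $f$. Ties arise from workers with $\partial_w[u_f(S)]=0$, or from surplus exactly $0$ at other firms. I would handle this first by showing that such workers are harmless. If $\mu^*(f)=T\subsetneq S$ with $u_f(T)=u_f(S)$, I would recompute payments at $T$ and aim to show the violation persists, possibly after replacing $S$ by a minimal violating set. Failing that, I would perturb the disutilities of workers in $S$ at $f$ within the convex domain to make the assignment of $S$ strictly optimal, and argue by continuity that the strict loss survives.
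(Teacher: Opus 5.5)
Your argument is correct and is essentially the paper's. Sufficiency uses the same three ingredients: Lemma~\ref{lm:mp-order}(i), $\mu^*(f)\in\calsf$, and weak substitutes of $u_f$. You merely bound each salary by $\partial_w[u_f(S)]$ before summing. Necessity uses the same kind of extreme profile; making workers outside $S$ useless at every firm even lets you compute $V(W)$ and $V(W\setminus w)$ directly, without Lemma~\ref{lm:mp-order}(ii).

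The tie-breaking issue is real; the paper passes over it by asserting that $f$ is assigned exactly $S$. It closes with your first idea, and no minimal set or perturbation is needed. Perturbation could not help anyway for a worker with $\partial_w[u_f(S)]=0$, since disutilities are nonnegative.

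To close it: any efficient matching respecting the convention has $\mu^*(f)=T\subseteq S$ with $u_f(T)=u_f(S)$. The salaries $p^*(w)=u_f(S)-u_f(S\setminus w)$ for $w\in T$ do not depend on which efficient matching is selected. By monotonicity, each $w\in S\setminus T$ has $\partial_w[u_f(S)]=0$. Hence $f$'s payoff is still $u_f(S)-\sum_{w\in S}\partial_w[u_f(S)]<0$.
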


\medskip
\noindent
\begin{proof}
{\sc Sufficiency.} First, we show that if $u_f$ satisfies weak substitutes, then $V_f$ satisfies weak substitutes on $\calsf$. To see this, pick $T\in \calsf$ and notice that
\begin{align*}
    V_f(T) & = u_f(T)-\sum_{w\in T}u_w(f) \\ 
    & \ge \sum_{w\in T} \big[u_f(T)-u_f(T\setminus w)\big]-\sum_{w\in T}u_w(f)\\
    & \ge \sum_{w\in T} \big[V_f(T)-V_f(T\setminus w)\big]
\end{align*}
where the first inequality follows since $u_f$ satisfies weak substitutes and the second inequality follows since $T \in \calsf$ but $T \setminus w$ may not be in $\calsf$, and therefore
\begin{align*}
V_f(T) & = u_f(T) -\sum_{w\in T} u_w(f) \\
V_f(T\setminus w)&\ge u_f(T\setminus w) -\sum_{w'\in T\setminus w} u_{w'}(f), \qquad \forall~w\in T
\end{align*}

Suppose that $u_f$ satisfies weak substitutes for each~$f$. 
Let $\mu^*$ be an efficient assignment. Then, for each firm $f$, we have $\mu^*(f)\in \calsf$. Since $V_f$ satisfies weak substitutes on ${\cal S}^f$, we have
  \begin{align}
    V_f(\mu^*(f)) &\ge  \sum_{w\in \mu^*(f)}\Big[V_f(\mu^*(f))-V_f(\mu^*(f)\backslash w) \Big] \label{eq:proof-IR0}
  \end{align}
  By Lemma~\ref{lm:mp-order}(i), 
    \begin{align}
        V_f(\mu^*(f)) - V_f(\mu^*(f) \backslash w) &\ge V(W) - V(W \backslash w), \qquad \forall w \in \mu^*(f) \label{eq:proof-IR00}
    \end{align}
Combining (\ref{eq:proof-IR0}) and (\ref{eq:proof-IR00}), we get
\begin{align}
    \label{eq:proof-IR}
    V_f(\mu^*(f)) &\ge  \sum_{w\in \mu^*(f)} \Big[ V(W) - V(W \backslash w)\Big]
\end{align}

By eq.(\ref{eq:vcg-pmt2}), firm~$f$'s payoff in the VCG mechanism is 
\begin{align*}
     u_f(\mu^*(f)) - \sum_{w\in \mu^*(f)} p^*(w) & = u_f(\mu^*(f))- \sum_{w\in \mu^*(f)} u_w(f) -\sum_{w\in \mu^*(f)}\bigg[V(W)-V(W\backslash w) \bigg] \\
     & = V_f(\mu^*(f)) - \sum_{w\in \mu^*(f)}\bigg[V(W)-V(W\backslash w) \bigg] \\
     & \ge 0
\end{align*}
where the inequality follows from (\ref{eq:proof-IR}). Thus,  firm $f$ has a  non-negative surplus after making VCG payments to each of the workers it employs.\footnote{Note that the dependence of $V(\cdot),\, V_f(\cdot)$ on worker types $\bf u$ is suppressed as the above (in)equalities hold for each $\bf u$. In the remainder of the proof, we do not suppress this dependence as the construction is for a specific instance of worker types.} 

\medskip\noindent
{\sc Necessity.} Next, suppose that utility function of firm~$f'$ does not satisfy weak substitutes. That is, there exists $S\subseteq W$ such that 
\begin{align}\label{eq:not-wsub}
   u_{f'}(S)  & < \sum_{w\in S} \big[u_{f'}(S)-u_{f'}(S\setminus w)\big]
\end{align}
We construct a profile of worker types, $\bf\hat u$, at which the mechanism is not IR. 

For all $w\in S$, let 
\begin{align*}
   \hat u_w(f) & = 
    \begin{cases}
            0, & \mbox{  if  } f= {f'} \\
            \overline{u}, & \mbox{  if  } f\neq {f'}
    \end{cases}
\end{align*}
where, as defined earlier, $\overline{u}=\max_f u_f(W)$. Similarly, for all $w\not\in S$, let 
\begin{align*}
    \hat u_w(f) & = 
    \begin{cases}
            \overline{u}, & \mbox{  if  } f= {f'} \\
            0, & \mbox{  if  } f\neq {f'}
    \end{cases}
\end{align*}
By the domain assumption, $[0,\overline{u}]^m \subset \mathcal{D}$. Hence, ${\bf \hat u}\in \cal D$.

In any efficient assignment $\mu^*({\bf \hat u})$ at $\bf\hat u$, $\mu^*(f'; {\bf \hat u})=S$.  The payoff of firm~$f'$ at the VCG outcome is
\begin{align*}
    u_{f'}(\mu^*(f'; {\bf \hat u})) -\sum_{w\in \mu^*(f'; {\bf \hat u})} p^*(w; {\bf\hat u}) & = u_{f'}(S) -\sum_{w\in S} p^*(w; {\bf\hat u})\\
    & = V_{f'}(S; {\bf\hat u})- \sum_{w\in S} \bigg[V(W; {\bf\hat u})-V(W\setminus w; {\bf\hat u})\bigg] \\ 
    & = V_{f'}(S; {\bf\hat u})- \sum_{w\in S} \bigg[V_{f'}(S; {\bf\hat u})-V_{f'}(S\setminus w; {\bf\hat u})\bigg] \\
    & = u_{f'}(S)- \sum_{w\in S} \bigg[u_{f'}(S)-u_{f'}(S\setminus w)\bigg] \\
    & <0
\end{align*}
The inequality follows from (\ref{eq:not-wsub}), the last equality from the fact $\hat u_w(f')=0$ for all $w\in S$, and the penultimate equality follows from Lemma~\ref{lm:mp-order}(ii) as an efficient allocation of $W\setminus w$, $w\in S$, is
\begin{align*}
    \mu^\sharp (f; {\bf\hat u})=
    \begin{cases}
            S\setminus w, & \mbox{  if  } f= {f'} \\
            \mu^*(f; {\bf\hat u}), & \mbox{  if  } f\neq {f'}
    \end{cases}
\end{align*}
Thus, the VCG outcome is not individually rational if the utility function of one firm does not satisfy weak substitutes.
\end{proof}

\medskip
Next, we investigate conditions under which the VCG mechanism is SIR. For this, we need the following definitions.

A collection of subsets $\cal R$ is {\bf downward closed} if $S\in\cal R$ and $T\subset S$, then $T\in\cal R$.

A function $h:2^W \rightarrow \Re$ is {\bf submodular} on ${\cal R}\subseteq 2^W$ if $\cal R$ is downward closed and for any $T\subset S\in {\cal R}$, and $w \in T$, 
    \begin{align}\label{eq:subm}
        \partial_w[h(T)] &\ge \partial_w[h(S)]
    \end{align}
If $h$ is submodular on $2^W$, we say that $h$ is submodular.\footnote{Note that $2^W$ is downward closed.}
Submodularity implies weak substitutes. In fact, as the next lemma shows, submodularity is equivalent to a condition stronger than weak substitutes. The proof is in the appendix.

\bigskip
\begin{lemma}\label{lm:equiv}
A function $h:2^W\to \Re$ is submodular on $\mathcal{R}$ if and only if
\begin{align}\label{eq:ssubs}
         h(S)-h(S\setminus S') &\ge \sum_{w \in S'}\partial_w[h(S)]~\qquad~\forall~S \in \mathcal{R},\ \forall~S' \subset S
\end{align}
\end{lemma}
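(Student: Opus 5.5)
We prove the two directions separately. The forward direction is a telescoping argument. The converse reduces submodularity to its one-element-removal form and then chains. Throughout, $\mathcal{R}$ is downward closed, as the definition of submodularity on $\mathcal{R}$ requires. We read ``$S'\subset S$'' in (\ref{eq:ssubs}) as allowing $S'=S$, consistent with the paper's use of $\subset$ elsewhere, e.g.\ in the definition of weakly increasing.

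\emph{Submodularity implies (\ref{eq:ssubs}).} Fix $S\in\mathcal{R}$ and $S'=\{w_1,\dots,w_k\}\subseteq S$. For $i=1,\dots,k$ put $T_i=S\setminus\{w_1,\dots,w_{i-1}\}$, so $T_1=S$ and $w_i\in T_i$. Telescoping gives
\begin{align*}
h(S)-h(S\setminus S') = \sum_{i=1}^k \big[h(T_i)-h(T_i\setminus w_i)\big] = \sum_{i=1}^k \partial_{w_i}[h(T_i)].
\end{align*}
Each $T_i\subseteq S$ lies in $\mathcal{R}$ by downward closedness. Submodularity (\ref{eq:subm}) then gives $\partial_{w_i}[h(T_i)]\ge \partial_{w_i}[h(S)]$; for $i=1$ this holds with equality. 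Summing over $i$ yields (\ref{eq:ssubs}).

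\emph{(\ref{eq:ssubs}) implies submodularity.} First we extract a one-step inequality. Take any $R\in\mathcal{R}$ and distinct $a,w\in R$, and apply (\ref{eq:ssubs}) with $S'=\{a,w\}$:
\begin{align*}
h(R)-h(R\setminus\{a,w\}) \ge \big[h(R)-h(R\setminus a)\big]+\big[h(R)-h(R\setminus w)\big].
\end{align*}
Rearranging gives $h(R\setminus a)-h(R\setminus\{a,w\})\ge h(R)-h(R\setminus w)$, that is,
\begin{align*}
\partial_w[h(R\setminus a)]\ge \partial_w[h(R)].
\end{align*}

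Now take $T\subset S\in\mathcal{R}$ and $w\in T$, and write $S\setminus T=\{a_1,\dots,a_r\}$. Set $R_0=S$ and $R_j=R_{j-1}\setminus a_j$, so that $R_r=T$. Every $R_j$ is in $\mathcal{R}$ by downward closedness and contains $w$. Applying the one-step inequality to each $R_{j-1}$ with $a=a_j$ and chaining gives
\begin{align*}
\partial_w[h(T)]=\partial_w[h(R_r)]\ge\cdots\ge\partial_w[h(R_0)]=\partial_w[h(S)],
\end{align*}
which is (\ref{eq:subm}).

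The main point to watch is the one-step inequality. It needs (\ref{eq:ssubs}) for a two-element $S'$, including the case $R=\{a,w\}$, where $S'=R$. This is why we read $\subset$ as $\subseteq$ in (\ref{eq:ssubs}). Under a strict reading, the case $S=\{a,w\}$, $T=\{w\}$ would not follow, so the lemma as stated requires the weak reading. The other ingredient is downward closedness of $\mathcal{R}$, which keeps every intermediate set $R_j$ (and every $T_i$ in the forward direction) inside $\mathcal{R}$, so that (\ref{eq:ssubs}) and (\ref{eq:subm}) apply to them.
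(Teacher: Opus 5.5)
Your proof is correct and follows the paper's argument essentially step for step. The forward direction is a telescoping sum bounded termwise by submodularity, and the converse rearranges the two-element case $S'=\{a,w\}$ of (\ref{eq:ssubs}) into a one-step inequality, then chains it through intermediate sets kept in $\mathcal{R}$ by downward closedness (you descend from $S$ where the paper ascends from $T$, which makes no difference). Your remark that $S'=S$ must be admissible in (\ref{eq:ssubs}) is also right and matches the paper's implicit usage: its converse applies (\ref{eq:ssubs}) with $S'=\{w,w'\}$ even when $|S|=2$, and its forward direction explicitly takes $S'\subseteq S$.
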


\medskip
Clearly, this condition implies weak substitutes but, as the next example shows, is not implied by it. Therefore, we refer to (\ref{eq:ssubs}) as the {\bf strong substitutes} condition. Lemma \ref{lm:equiv} shows that submodularity is equivalent to strong substitutes.

\begin{example}\label{ex:ss-ws} {\sc Weak substitutes is strictly weaker than submodularity.}\\
{\rm Consider the function $h$ on subsets of $W=\{w_1,w_2,w_3\}$ defined below:
\begin{align*}
h(S) = 
\begin{cases}
2 & \textrm{if}~|S|= 1 \textrm{ or } 2\\
3 & \textrm{if}~|S|=3, \\[5pt]
\end{cases}
\end{align*}
and $h(\emptyset)=0$. Workers are weak substitutes as 
\begin{align*}
    3= h(W) & = \sum_{w\in W} \partial_{w}[h(W)] = 3 \\
    2= h(S) & > \sum_{w\in S} \partial_{w}[h(S)] = 0, \qquad |S|=2
\end{align*}
However, $h$ is not submodular as} 
\begin{align*}
   0= h(\{w_1,w_3\}) -h(\{w_1\}) & < h(\{w_1,w_2,w_3\}) -h(\{w_1,w_2\}) =1
\end{align*}
\hfill$\square$
\end{example}

The next result establishes that the utility functions of firms are submodular (or equivalently, strong substitutes) is a necessary and sufficient condition for the VCG mechanism to be SIR.

\medskip

\begin{theorem}\label{th:vcgsir} The VCG mechanism is strongly individually rational if and only if, $u_f$ is submodular, for each $f$.
\end{theorem}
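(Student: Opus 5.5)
The proof mirrors that of Theorem~\ref{th:vcgir}, with Lemma~\ref{lm:equiv} taking the place of the weak substitutes inequality. First note how SIR reads at the VCG outcome. Fix a firm $f$ and $S'\subseteq\mu^*(f)$; the deviation is to keep $\mu^*(f)\setminus S'$ and fire $S'$. By (\ref{eq:vcg-pmt2}), SIR for $f$ is equivalent to
\begin{align*}
u_f(\mu^*(f))-u_f(\mu^*(f)\setminus S') \ge \sum_{w\in S'}\Big[u_w(f)+V(W)-V(W\setminus w)\Big]\qquad\forall~S'\subseteq \mu^*(f).
\end{align*}
IR for firms is the special case $S'=\mu^*(f)$, and IR for workers is automatic. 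So it suffices to establish this family of inequalities.

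\textsc{Sufficiency.} Assume each $u_f$ is submodular and write $T=\mu^*(f)\in\calsf$. I will argue directly with $u_f$ rather than passing to $V_f$, since $V_f$ need not be submodular and $\calsf$ need not be downward closed. Lemma~\ref{lm:equiv} applied to $u_f$ on $2^W$ gives
\begin{align*}
u_f(T)-u_f(T\setminus S')\ge \sum_{w\in S'}\big[u_f(T)-u_f(T\setminus w)\big].
\end{align*}
Next, $T\in\calsf$ gives $V_f(T)=u_f(T)-\sum_{w\in T}u_w(f)$ and $V_f(T\setminus w)\ge u_f(T\setminus w)-\sum_{w'\in T\setminus w}u_{w'}(f)$. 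Hence $u_f(T)-u_f(T\setminus w)\ge V_f(T)-V_f(T\setminus w)+u_w(f)$. Lemma~\ref{lm:mp-order}(i) with $S=\{w\}$ then gives $V_f(T)-V_f(T\setminus w)\ge V(W)-V(W\setminus w)$. Chaining these three steps yields exactly the displayed SIR inequality.

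\textsc{Necessity.} Suppose $u_{f'}$ is not submodular. By Lemma~\ref{lm:equiv} there exist $S\subseteq W$ and $S'\subset S$ with
\begin{align*}
u_{f'}(S)-u_{f'}(S\setminus S')<\sum_{w\in S'}\big[u_{f'}(S)-u_{f'}(S\setminus w)\big].
\end{align*}
I will reuse the type profile $\bf\hat u$ from the necessity part of Theorem~\ref{th:vcgir}: workers in $S$ have disutility $0$ at $f'$ and $\overline u$ elsewhere, and workers outside $S$ have disutility $\overline u$ at $f'$ and $0$ elsewhere. As argued there, $\mu^*(f';{\bf\hat u})=S$. By Lemma~\ref{lm:mp-order}(ii), for each $w\in S$ we have $V(W)-V(W\setminus w)=V_{f'}(S)-V_{f'}(S\setminus w)$. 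Since disutilities at $f'$ are zero on $S$ and $u_{f'}$ is weakly increasing, $V_{f'}(R)=u_{f'}(R)$ for every $R\subseteq S$. Hence $p^*(w)=u_{f'}(S)-u_{f'}(S\setminus w)$ for $w\in S$. Firing $S'$ changes $f'$'s payoff by
\begin{align*}
\sum_{w\in S'}p^*(w)-\big[u_{f'}(S)-u_{f'}(S\setminus S')\big]>0,
\end{align*}
so SIR is violated.

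The main obstacle is the sufficiency step. It is tempting to transfer submodularity from $u_f$ to $V_f$, but that transfer can fail. The plan above avoids it by using Lemma~\ref{lm:equiv} on $u_f$ itself and invoking $T\in\calsf$ only for single-worker removals, where the inequality runs in the needed direction. A secondary point to verify is that Lemma~\ref{lm:equiv}, stated with $S'\subset S$ strict, also covers $S'=S$, the pure IR case. This follows because submodularity with $h(\emptyset)=0$ implies weak substitutes, by telescoping.
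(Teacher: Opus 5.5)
Your proof is correct. Necessity is essentially the paper's argument: the paper extracts a pairwise violation $u_{f'}(S\cup\{w_\ell,w_\kappa\})-u_{f'}(S\cup w_\kappa)>u_{f'}(S\cup w_\ell)-u_{f'}(S)$ directly from the definition and shows that firing $\{w_\ell,w_\kappa\}$ is profitable. That is your construction with $|S'|=2$, and the type profile and the computation of $p^*$ via Lemma~\ref{lm:mp-order}(ii) are the same. One caution: you invoke the converse of Lemma~\ref{lm:equiv} while reading its $S'\subset S$ as strict inclusion. Under that reading the converse is false; take $W=\{a,b\}$, $h(\{a\})=h(\{b\})=0$, $h(W)=1$. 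The paper's proof of the lemma uses $S'\subseteq S$, and your firing argument covers $S'=S$ unchanged, so nothing breaks.

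Sufficiency is where you genuinely differ. The paper first proves Lemma~\ref{lm:sf-dwnclosed}, that submodularity of $u_f$ makes $\calsf$ downward closed. It deduces that $V_f$ coincides with $u_f$ minus an additive term on $\calsf$ and is therefore submodular there. It then applies Lemma~\ref{lm:equiv} to $V_f$ to get (\ref{eq:ss1}), and finishes with Lemma~\ref{lm:mp-order}(i) and $S\in\calsf$.

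You instead apply Lemma~\ref{lm:equiv} to $u_f$ on $2^W$. You use $\mu^*(f)\in\calsf$ only through the one-sided bound $\partial_w[u_f(T)]\ge \partial_w[V_f(T)]+u_w(f)$, exactly as in the sufficiency half of Theorem~\ref{th:vcgir}. This makes Lemma~\ref{lm:sf-dwnclosed} unnecessary and shows that Theorems~\ref{th:vcgir} and~\ref{th:vcgsir} rest on one argument, with weak substitutes replaced by strong substitutes. The paper's route buys structural facts about $\calsf$ and $V_f$ that are of independent interest but not needed for the theorem.

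Your stated motivation for avoiding $V_f$ is inaccurate, though. Under submodularity, $\calsf$ is downward closed; that is precisely Lemma~\ref{lm:sf-dwnclosed}. The transfer of submodularity to $V_f$ also does hold on $\calsf$, which is all the paper uses. It can fail only on $2^W$ as a whole, and that is never invoked.
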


\medskip\noindent
\begin{proof}
We start with a technical lemma, the proof of which is in the Appendix.

\begin{lemma}\label{lm:sf-dwnclosed}
If $u_f$ is submodular, then $\calsf$ is downward closed. 
\end{lemma}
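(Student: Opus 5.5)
The plan is a short lattice argument. Fix a firm $f$ and write $g(T):=u_f(T)-\sum_{w\in T}u_w(f)$ for $T\subseteq W$. By definition, $S\in\calsf$ means $g(S)\ge g(R)$ for every $R\subseteq S$. To show $\calsf$ is downward closed, I take $S\in\calsf$, a subset $T\subset S$, and an arbitrary $R\subseteq T$, and I show $g(T)\ge g(R)$. Then $T\in\calsf$ follows.

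First, $g$ is submodular whenever $u_f$ is. Subtracting the additive term $\sum_{w\in T}u_w(f)$ shifts each marginal product $\partial_w[\cdot]$ by the constant $-u_w(f)$, whatever set it is evaluated at. So inequality (\ref{eq:subm}) for $u_f$ transfers directly to $g$.

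Next, I convert the marginal-product definition (\ref{eq:subm}) into the lattice inequality
\begin{align*}
g(A)+g(B)\ge g(A\cup B)+g(A\cap B),\qquad A,B\subseteq W.
\end{align*}
Enumerate $B\setminus A=\{b_1,\dots,b_k\}$. Write $g(A\cup B)-g(A)$ as a telescoping sum of marginal products of $b_1,\dots,b_k$, each added to a set containing $A$. Write $g(B)-g(A\cap B)$ as the same telescoping sum, but with $b_1,\dots,b_k$ added to the corresponding sets containing $A\cap B$. Each set in the second chain is a subset of the corresponding set in the first chain. Applying (\ref{eq:subm}) term by term therefore gives $g(A\cup B)-g(A)\le g(B)-g(A\cap B)$. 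Lemma~\ref{lm:equiv} is not needed for this step; it is just the standard telescoping argument.

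Finally, I apply the lattice inequality with $A=T$ and $B=R\cup(S\setminus T)$. Since $R\subseteq T\subset S$, we have $A\cup B=S$, and $A\cap B=R$ because $S\setminus T$ is disjoint from $T$. Hence
\begin{align*}
g(T)+g\big(R\cup(S\setminus T)\big)\ge g(S)+g(R).
\end{align*}
Because $R\cup(S\setminus T)\subseteq S$ and $S\in\calsf$, we have $g(S)\ge g(R\cup(S\setminus T))$. Substituting gives $g(T)\ge g(R)$. As $R\subseteq T$ was arbitrary, $T\in\calsf$.

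The only step needing care is the middle one, deriving the lattice form from the paper's marginal definition of submodularity on $2^W$. In particular, one must check that the telescoping chains are nested as claimed. The rest is immediate.
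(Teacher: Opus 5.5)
Your proof is correct and is essentially the paper's argument. Your set $B=R\cup(S\setminus T)$ is exactly the paper's $S\setminus T_1$ with $T_1=T\setminus R$, and your lattice step $g(T)-g(R)\ge g(S)-g(B)\ge 0$ is the paper's chain $u_f(T)-u_f(T\setminus T_1)\ge u_f(S)-u_f(S\setminus T_1)\ge\sum_{w\in T_1}u_w(f)$, where the last inequality comes from $S\in\calsf$. The only difference is that you derive this set-marginal inequality from the single-element definition (\ref{eq:subm}) by telescoping, whereas the paper simply attributes it to submodularity.
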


\medskip

Next, we show that if $u_f$ is submodular, then $V_f$ is submodular on $\calsf$. Choose $T \in \mathcal{S}^f$. Let $R\subset T\in \calsf $. By Lemma~\ref{lm:sf-dwnclosed}, $\calsf$ is downward closed, and hence, for each $w \in R$, we have $R,\ R\setminus w, \ T\setminus w \in \calsf$. As a result,
    \begin{align}\nonumber
        &\Big[V_f(R) - V_f(R \setminus w)\Big] - \Big[V_f(T) - V_f(T \setminus w)\Big] \\\nonumber
        &= \Big[u_f(R) - u_f(R \setminus w) -  u_w(f) \Big] - \Big[u_f(T) - u_f(T \setminus w) - u_w(f)\Big] \\\nonumber
        &= \Big[u_f(R) - u_f(R \setminus w)  \Big] - \Big[u_f(T) - u_f(T \setminus w)\Big] \\ \label{eq:Vf-subs}
        &\ge 0,
    \end{align}
    where the inequality follows from the submodularity of $u_f$. 
    
    Using Lemma \ref{lm:equiv}, we conclude that for any $S \subset \mu^*(f)$, we have 
    \begin{align}\label{eq:ss1}
        V_f(\mu^*(f)) &\ge V_f(S) + \sum_{w \in \mu^*(f) \setminus S} \partial_w[V_f(\mu^*(f))]
    \end{align}

\noindent
{\sc Sufficiency.} Now, suppose that for each $f$, $u_f$ is submodular. Let $\mu^*$ be an efficient matching. From (\ref{eq:vcg-pmt2}) it follows that firm~$f$'s payoff in the VCG mechanism is
\begin{align*}
u_f(\mu^*(f)) - \sum_{w \in \mu^*(f)}p^*(w) & =  V_f(\mu^*(f))-\sum_{w\in \mu^*(f)}\partial_w[V(W)] 
\end{align*}
Let $S\subset \mu^*(f)$. Without loss of generality, $\mu^*(f)\in \calsf$. By Lemma~\ref{lm:sf-dwnclosed}, $\calsf$ is downward closed, and hence, $S\in\calsf$. Then
\begin{align*}
    V_f(\mu^*(f))-\sum_{w\in\mu^*(f)} \partial_w[V(W)]  
    & =   V_f(\mu^*(f)) - \sum_{w \in \mu^*(f) \setminus S} \partial_w[V(W)] - \sum_{w \in S} \partial_w[V(W)] \\
    & \ge V_f(\mu^*(f)) - \sum_{w \in \mu^*(f) \setminus S} \partial_w[V_f(\mu^*(f))] - \sum_{w \in S} \partial_w[V(W)] \\
    & \ge V_f(S) -\sum_{w\in S}\partial_w[V(W)] \\
    &= u_f(S) - \sum_{w \in S}u_w(f) -\sum_{w\in S}\partial_w[V(W)]\\[5pt] 
    &= u_f(S) - \sum_{w \in S} p^*(w)
\end{align*}
where the first inequality follows from Lemma~\ref{lm:mp-order}(i), and the second inequality from~(\ref{eq:ss1}). 
Hence, the VCG mechanism is strongly individually rational.

\medskip
\noindent
{\sc Necessity.} Next, suppose that utility function of  firm~$ f'$ is not submodular. There must exist $S\subset W$ and $w_\ell,w_\kappa\not\in S$ such that 
\begin{align}
   u_{f'}(S\cup \{w_\ell, w_\kappa\})-u_{f'}(S\cup w_\kappa)  & > u_{f'}(S\cup w_\ell)-u_{f'}(S)  \label{eq:ufvf}
\end{align}

We construct a profile of worker types $\bf\hat  u$ for which the VCG mechanism is not strongly individually rational. For all $w\in S \cup \{w_\ell,w_\kappa\}$, let 
\begin{align*}
   \hat  u_w(f) & = 
    \begin{cases}
            0, & \mbox{  if  } f= {f'} \\
            \overline{u}, & \mbox{  if  } f\neq {f'}
    \end{cases}
\end{align*}
where $\overline{u} = \max_f u_f(W)$. Similarly, for all $w\not\in \{S,w_\ell,w_\kappa\}$, let 
\begin{align*}
   \hat  u_w(f) & = 
    \begin{cases}
            \overline{u}, & \mbox{  if  } f= {f'} \\
            0, & \mbox{  if  } f\neq {f'}
    \end{cases}
\end{align*}
By the domain assumption, ${\bf \hat{u}} \in \mathcal{D}$.
In any efficient assignment $\mu^*$ at $\bf \hat u$, $\mu^*(f')=S \cup \{w_\ell,w_\kappa\}$ 
. Then VCG payment by firm $f'$ to worker $w_\ell$~is 
\begin{align*}
    p^*(w_\ell; {\bf\hat  u}) & = V(W; {\bf\hat  u})-V(W\setminus w_\ell; {\bf\hat  u}) +\hat  u_{w_\ell}(\mu^*(w_\ell)) \\
    & = V_f(\mu^*(f'); {\bf \hat u})-V_f(\mu^*(f')\setminus w_\ell; {\bf\hat u}) + \hat u_{w_\ell}(f') \\
    & = u_{f'}(S\cup \{w_\ell, w_\kappa\})-u_{f'}(S\cup w_\kappa)
\end{align*}
where the last equality follows from the fact that  $\hat u_w(f')=0$ for all $w\in S \cup \{w_\ell,w_\kappa\}$ and the second inequality follows from Lemma~\ref{lm:mp-order}(ii) and the fact that the efficient allocation $\mu^\sharp$ of $W\setminus w_\ell$ to the firms at $\bf \hat u$ is 
\begin{align*}
    \mu^\sharp(f) & = 
    \begin{cases}
            \mu^*(f)\setminus w_\ell, & \mbox{  if  } f= {f'} \\
            \mu^*(f), & \mbox{  if  } f\neq {f'}
    \end{cases}
\end{align*}

Similarly, 
\begin{align*}
     p^*(w_\kappa; {\bf\hat  u}) & = u_{f'}(S\cup \{w_\ell, w_\kappa\})-u_{f'}(S\cup w_\ell)
\end{align*}
But
\begin{align*}
  &   u_{f'}(S \cup \{w_\ell,w_\kappa\}) - p^*(w_\ell; {\bf\hat  u})-p^*(w_\kappa; {\bf\hat  u}) \\
  &\ \  =\  u_{f'}(S\cup \{w_\ell, w_\kappa\}) -  u_{f'}(S\cup \{w_\ell, w_\kappa\})+u_{f'}(S\cup w_\kappa)-u_{f'}(S\cup \{w_\ell, w_\kappa\})+u_{f'}(S\cup w_\ell) \\
  & \ \ < \ \ u_{f'} (S)
\end{align*}
where the inequality follows from (\ref{eq:ufvf}). Hence, firm~$f'$ is better off dismissing workers $w_\ell,\, w_\kappa$ at the type profile $\bf\hat u$. Thus, the VCG outcome is not strongly individually rational if the utility function of one firm does not satisfy submodularity.
\end{proof}

\medskip

If the domain of worker types is smaller that $\cal D$, then the sufficiency directions of Theorem \ref{th:vcgir} and Theorem \ref{th:vcgsir} still hold, but the necessity directions may not.

\medskip

\section{Discussion}\label{se:di}

\subsection{Stability}
We compare our results to the results in the literature that uses stability as a desiderata, in addition to participation constraints. First, we provide the usual definition of stability, for outcomes $(\mu,p)$ and for mechanisms $(g,p)$.

An outcome $(\mu,p)$ is {\bf blocked} at type profile ${\bf u}$ if there exists a firm $f$,\footnote{As \cite{KC82} point out, the essential coalitions consist of one firm and subsets of workers.} workers $S\subseteq W$, and payments $\hat{p}:S \rightarrow \Re_+$  s.t.  
	\begin{align*}
    	u_f(S)-\sum_{w\in S}\hat p(w) &\ge  u_f(\mu(f))-\sum_{w\in \mu(f)} p(w) \\
	\hat p(w)-u_w(f)& \ge p(w)-u_w(\mu(w)),\qquad \forall w\in S
	\end{align*}
	with at least one strict inequality.

An outcome is {\bf stable} at type profile ${\bf u}$ if it is not blocked at ${\bf u}$.
Recall that a mechanism maps type profiles of workers to outcomes. 
We say that a mechanism is {\bf stable} if its outcome is not blocked at any type profile. 

In this setting, an outcome is stable if and only if it is in the core.  
Thus, a mechanism is stable if and only if it maps type profiles to outcomes in the core.

\cite{HKK26} show that if the utility function of each firm satisfies the gross substitutes condition of \cite{KC82}, then the ``worker-optimal" stable mechanism is strategy-proof. Conversely, \cite{BIK25} show that if a strategy-proof and stable mechanism exists, then the utility function of each firm must satisfy the gross substitutes condition. 

As stable outcomes are immune to blocks by all coalitions while strongly individual rational outcomes are immune to blocks by one-agent coalitions only, it is not surprising that a stronger condition on firm utility functions is required for stability, namely, gross substitutes rather than submodularity. We can strengthen the notion of strong individual rationality to define a weaker notion of stability where blocking coalitions consist of workers who are not matched, a firm, and workers currently matched to the firm. As noted in the Introduction, such a weaker notion of stability is appropriate for assigning teachers to schools in a school district, for example. It is not difficult to show that this weaker notion of stability is also achieved if and only if firm utilities are submodular.

As the next example shows, the necessity of gross substitutes  in \cite{BIK25} is a consequence of their assumption that the domain of worker types is rich.

\medskip

\begin{example}\label{ex:VCG-core} {\sc Firm 1 violates GS and firm 2 has additive valuations; VCG may or may not be stable} \\[7pt]
{\rm There are two firms and three workers. Firm 1's utility function is additive with a budget 
\begin{align*}
    u_1(S)& = \min\Big[ 2,\sum_{w\in S} v_1(w)\Big]
\end{align*}
while firm 2's utility function is additive
\begin{align*}
    u_2(S)& = \sum_{w\in S} v_2(w)
\end{align*}
and 
\begin{center}
\begin{tabular}{|c||c|c|c|}
\hline
     & $w_1$ & $w_2$ & $w_3$ \\\hline\hline
$v_1(w_i)$   & 1 & 1 & 2 \\\hline
 $v_2(w_i)$   & $1$ & $1$ & 1 \\ \hline
 \end{tabular}
\end{center}
Proposition~2 in \cite{LLN06} implies that $u_1$ is submodular. Therefore, as $u_2$ is additive (thus, also submodular), Theorem~\ref{th:vcgsir} implies that the VCG mechanism is SIR for any domain of worker types.

Firm 1's utility function, $u_1$, does not satisfy gross substitutes. 
To see this, consider the prices $p$ and $p'$, with $p\le p'$, defined below:
\begin{align*}
    p(w_1)=0, p(w_2) = 0.5, p(w_3) = 0.5 \\
    p'(w_1)=1, p'(w_2) = 0.5, p'(w_3) = 0.5
\end{align*}
The demand set of firm 1 at $p$ is $\{\{w_3\},\{w_1,w_2\}, \{w_1,w_3\}\}$, with each bundle in demand set yielding a payoff of $1.5$ to the firm. The demand set at $p'$ consists of the unique bundle $\{w_3\}$. Although $p(w_2)=p'(w_2)$, $w_2$ is demanded at $p$ and but is not demanded at $p'$, violating gross substitutes.

Worker disutilities are shown in the next table, where $d_i\ge 0$.
\begin{center}
\begin{tabular}{|c||c|c|c|}
\hline
     & $u_{w_1}(f_i)$ & $u_{w_2}(f_i)$ & $u_{w_3}(f_i)$  \\\hline\hline
$f_1$   & 0 & 0 & 0 \\\hline
 $f_2$   & $d_1$ & $d_2$ & 0 \\ \hline
 \end{tabular}
\end{center}

\medskip
We show that the VCG point is in the core for smaller values of $d_1,\,d_2$ but not for larger values.

\medskip
Suppose that $d_i\in[0,\frac 12]$. Then $V(W)=4-d_1-d_2$, obtained by allocating $w_3$ to $f_1$ and $\{w_1,w_2\}$ to $f_2$. Moreover,
\begin{center}
\begin{tabular}{|c||c|c|c|}
\hline
     & $w_1$ & $w_2$ & $w_3$ \\\hline\hline
 $V(W\setminus w_i)$   & $3-d_2$ & $3-d_1$ & 2 \\ \hline
$\partial_{w_i}[V(W)]$   & $1-d_1$ & $1-d_2$ & $2-d_1-d_2$ \\ \hline
\end{tabular}
\end{center}

\medskip
\noindent
where $V(W\setminus w_1)=3-d_2$ is achieved by assigning $w_3$ to $f_1$ and $w_2$ to $f_2$, etc. The last row of the table gives the marginal products of each of the three workers.

Thus, the VCG payoff point is $(d_1+d_2,0;1-d_1,1-d_2,2-d_1-d_2)$, where $f_1$'s payoff is $d_1+d_2$, $f_2$'s payoff is 0, and the workers get their marginal products. It is supported by a Walrasian equilibrium with prices $(1,1,2-d_1-d_2)$. The core property of Walrasian outcomes implies that the VCG payoff point is in the core and therefore it is  stable.

\medskip

Suppose, instead, that $d_1,\,d_2\in(\frac 12,1]$. Then $V(W)=3$, obtained by allocating $\{w_1,w_2\}$ to~$f_1$ and $w_3$ to $f_2$. Moreover,
\begin{center}
\begin{tabular}{|c||c|c|c|}
\hline
     & $w_1$ & $w_2$ & $w_3$ \\\hline\hline
 $V(W\setminus w_i)$   & $3-d_2$ & $3-d_1$ & 2 \\ \hline
$\partial_{w_i}[V(W)]$   & $d_2$ & $d_1$ & 1 \\ \hline
\end{tabular}
\end{center}

\medskip

So, the VCG payoff point is $(2-d_1-d_2,0;d_2,d_1,1)$. The coalition $(f_1,w_3)$ can block the VCG outcome as follows. Suppose firm $1$ offers $w_3$ a payment of $1$. Then, $w_3$ is indifferent as she still gets a payoff of $1$. Firm $1$'s payoff is $2-1=1$, which is greater than firm 1's VCG payoff, $2-d_1-d_2$, as $d_1+d_2>1$. 
\hfill$\square$}
\end{example}

\subsection{Relationship to combinatorial auctions }\label{se:ca}

To facilitate comparison, in this section, we refer to firms as buyers with the set of buyers being $\{b_1,\ldots,b_m\}$ and workers as sellers with the set of sellers being $\{s_1,\ldots,s_n\}$.

A key difference between the two models is that in combinatorial auctions, the focus is on mechanisms that are strategy proof for buyers, while in job-matching, strategy-proof mechanisms for sellers are of interest. Another important difference is that in combinatorial auctions, there is one seller who has many objects, while in our setting there are several sellers, each of whom has one unit of an  object. We elaborate on the consequences of these differences. We restrict the comparison to  combinatorial auctions that are efficient.  

In both models, strategy-proofness is achieved through VCG mechanisms which award agents on one side of the market their marginal products. In the VCG mechanism in combinatorial auctions, as VCG payments from buyers to sellers  exceed the seller's costs, individual rationality for sellers is immediate without imposing any constraints on buyer utility functions. We showed that weak substitutes and submodularity are needed for the VCG mechanism to be IR and SIR, respectively, for buyers in the job-matching market model.

A substitutes condition plays a role in both models. However, in combinatorial auctions, buyers are assumed to be substitutes while we assume that sellers are substitutes.
Let $B\subseteq \{b_1,\ldots,b_m\} $ and $S\subseteq \{s_1,\ldots,s_n\}$, and let $V^*(B;S)$ be the maximum surplus that can be achieved when buyers in $B$ and sellers (objects) in~$S$ trade among themselves.

In combinatorial auctions, the following surplus function plays a key role:\footnote{In combinatorial auctions, there is one seller. So, regard $s_1,\ldots, s_n$ as objects owned by one seller.}
\begin{align*}
    V^{ca}(B):= V^*(B; \{s_1,\ldots,s_n\})
\end{align*}
This is the maximum surplus achieved when buyers in $B$ are allocated {\sl all} the objects. Buyers are substitutes~if 
\begin{align*}
    V^{ca}(B)- V^{ca}(B\setminus B')\ge \sum_{b\in B'} \big[V^{ca}(B) - V^{ca}(B\setminus b)\big], \qquad \forall B'\subset B\subseteq \{b_1,\ldots,b_m\}
\end{align*}
The VCG payoff point is in the core if and only if buyers are substitutes (see \cite{AM02}, \cite{BO02}). Buyers are substitutes if buyer utility functions satisfy gross substitutes (see Theorem 6, \cite{GS99}). The VCG mechanism is, of course, strategy proof for any utility function of the buyers. If buyers are substitutes, then there is an ascending-price implementation of the VCG outcome \citep{dVVS07,MP07}.\footnote{The prices in these auctions are {\sl non-linear and non-anonymous}, i.e., prices are defined for each (buyer, bundle of goods) pair.}

In the job-matching model, the surplus function of interest is\footnote{Note that $V^{mm}(S)$ is the same as $V(S)$ used in earlier sections of this paper.}
\begin{align*}
    V^{mm}(S):= V^*(\{b_1,\ldots,b_m\}; S)
\end{align*}
which is achieved when a subset of sellers $S$ trades with {\sl all} the buyers. Sellers are strong substitutes if 
\begin{align*}
    V^{mm}(S)- V^{mm}(S\setminus S')\ge \sum_{s\in S'} \big[V^{mm}(S) - V^{mm}(S\setminus s)\big], \qquad \forall S'\subset S\subset \{s_1,\ldots,s_n\}
\end{align*}
If the above inequality is required only for $S'=S$, then sellers are weak substitutes. 

We showed that the VCG mechanism in the job-matching model satisfies individual rationality if and only if sellers are weak substitutes if and only if buyer utility functions satisfy weak substitutes (Theorem~\ref{th:vcgir}). Strong individual rationality is achieved if and only if sellers are strong substitutes if and only if buyer utility functions satisfy submodularity (Theorem~\ref{th:vcgsir}).

As submodularity is a much weaker condition than gross substitutes, sellers are substitutes is a less restrictive condition that buyers are substitutes. However, the VCG payoff point need not be in the core in our model when sellers are substitutes (see the second part of Example~\ref{ex:VCG-core}). A necessary and sufficient condition for the core property of the VCG payoff point is that buyer utility function satisfies gross substitutes (\cite{HKK26} and \cite{BIK25}).
Thus, in both models, the gross-substitutes condition on buyer utility is  sufficient, and also necessary (in a maximal domain sense for combinatorial auctions), for the relevant VCG payoff point to be in the core.


\newpage 

\appendix 

\section{Missing proofs}

\noindent 
{\bf Proof of Lemma~\ref{lm:mp-order}:} (i) Pick $f \in F$ and $S \subseteq \mu^*(f)$. Then, 
\begin{align*}
\partial_S[V_f(\mu^*(f))] - \partial_S[V(W)] &= \Big[V_f(\mu^*(f)) - V_f(\mu^*(f) \setminus S)\Big] - \Big[ V(W) - V(W \setminus S)\Big] \\
    &= V(W \setminus S) - \Big[ V_f(\mu^*(f) \setminus S) + \sum_{f' \ne f} V_{f'}(\mu^*(f'))\Big] \\
    &\ge 0,
\end{align*} 
where the last inequality follows because the surplus from an efficient assignment of workers in $W \setminus S$ is no less than the surplus from an assignment of workers in $\mu^*(f) \setminus S$ to $f$ and  workers in $\mu^*(f')$ to $f'$, where $f' \ne f$.

\noindent
(ii) The hypothesis implies that the last inequality above is an inequality.
\hfill$\blacksquare$

 \bigskip

\noindent
{\bf Proof of Lemma~\ref{lm:equiv}:} 
Suppose that $h$ is submodular on $\mathcal R$.
Let $S\in\mathcal R$, $S' \subseteq S$ and $S':=\{w_1,\ldots,w_{\ell}\}$.
Then,
\begin{align*}
    & h(S) - h(S \setminus S') = h(S) - h(S \setminus \{w_1,\ldots,w_{\ell}\}) \\
    &= h(S) - h(S \setminus \{w_1\}) + h(S \setminus \{w_1\}) - h(S \setminus \{w_1,w_2\}) + h(S \setminus \{w_1,w_2\}) -  h(S \setminus \{w_1,w_2,w_3\}) \\
    &\qquad \ldots - h(S \setminus \{w_1,\ldots,w_{\ell-1}\}) + h(S \setminus \{w_1,\ldots,w_{\ell-1}\}) - h(S \setminus \{w_1,\ldots,w_{\ell}\}) \\
    &\ge h(S) - h(S \setminus \{w_1\}) + h(S) - h(S \setminus \{w_2\}) + \ldots + h(S) - h(S \setminus \{w_{\ell}\}) \\
    &= \sum_{w \in S'} \partial_w [h(S)]
\end{align*}
where the inequality follows from the submodularity of $h$.  Hence, $h$ satisfies (\ref{eq:ssubs}).

In the other direction, suppose that $h$ satisfies (\ref{eq:ssubs}). Take any $S\in\cal R$. If $|S|=1$, then $S$ trivially satisfies the condition for submodularity. Therefore, suppose that $|S|\ge 2$ and let  $w,\, w'\in~S$. By (\ref{eq:ssubs}),
\begin{align}\nonumber
 h(S ) -h(S\setminus \{w,\,w'\})  & \ge h(S) -h(S\setminus\{w\}) +h(S) -h(S\setminus\{w'\}) \\
\Longrightarrow \quad\,  h(S\setminus\{w\})  -h(S\setminus \{w,\,w'\}) & \ge  h(S) -h(S\setminus\{w'\})   \label{eq:atlast}
\end{align}
Next, let $T'\in \cal R$ and select any $T \subsetneq T'$ and $w \in T$. Let $T'\setminus T=\{w_1,w_2,\ldots, w_k\}$. Repeated application of (\ref{eq:atlast}) implies that
\begin{align*}
 h(T) -h(T\setminus \{w\})  & \ge h(T\cup \{w_1\}) -h((T\cup\{w_1\})\setminus \{w\}) \\
				       & \ge h(T\cup \{w_1,w_2\}) -h((T\cup\{w_1,w_2\})\setminus \{w\}) \\
					&  \vdots \\
					& \ge h(T\cup \{w_1,w_2, \ldots, w_k\}) -h((T\cup\{w_1,w_2,\ldots,w_k\})\setminus \{w\}) \\
					& = h(T')-h(T'\setminus \{w\})
\end{align*}
where the downward-closed property of $\cal R$ implies that all the subsets along this chain are in $\cal R$. Thus, $h$ is submodular on $\cal R$. 
\hfill$\blacksquare$

\medskip
\noindent
{\bf Proof of Lemma~\ref{lm:sf-dwnclosed}:} Select any $S\in\calsf$. If $|S|=1$ then,  as $\emptyset\in\calsf$, the lemma follows.

Consider $|S| \ge 2$. Pick $T_2 \subset T \subsetneq S$ and let $T_1 = T \setminus T_2$.
As $T_1\subset S$ and $S\in\calsf$, 
\begin{align}
    u_f(S)-\sum_{w\in S}u_w(f) & \ge u_f(S\setminus T_1)-\sum_{w\in S\setminus T_1}u_w(f) \nonumber \\
\Longrightarrow\qquad u_f(S)-u_f(S\setminus T_1)  & \ge \sum_{w\in T_1} u_w(f) \label{eq:ee1}
\end{align}
 As $T_1\subset T\subset S$, using the fact that $u_f$ is submodular and (\ref{eq:ee1}), we get
\begin{align*}
u_f(T) - u_f(T_2)  = u_f(T) - u_f(T\setminus T_1)  & \ge u_f(S) - u_f(S\setminus T_1) \ge \sum_{w\in T_1} u_w(f) \\
\Longrightarrow\qquad  u_f(T)-\sum_{w\in T} u_w(f) 
& \ge u_f(T_2)- \sum_{w\in  T_2} u_w(f) 
\end{align*}
As this holds for any $T_2\subset T$, we have $T\in \calsf$.
\hfill$\blacksquare$

\medskip
Thus, if $S\in\calsf$ then $S\cap T\in\calsf$ for any $T$.\footnote{Viewing elements of $\calsf$ as independent sets, $(W,\calsf)$ satisfies two of the three properties of a matroid under submodularity of $u_f$. However, it does not satisfy the augmentation property, unless $u_f$ satisfies gross substitutes.}

\medskip

If $W\in\calsf$ then Lemma~\ref{lm:sf-dwnclosed} implies that $\calsf$ consists of all subsets of $W$. If, instead, $W\not\in\calsf$ then there exist one or more maximal sets, $S_1^f,\ldots,S_L^f\in\calsf$ such that for each $\ell \in \{1,\ldots,L\}$, we have $S_\ell^f\subsetneq W$ and  $S\not\in \calsf$ for any $S\supsetneq S_\ell^f$. If $S\not\subseteq S_\ell^f$ for any $\ell$ then $S\not \in \calsf$.

\section{Proof of Lemma A}

{\bf Proof of Lemma A:} As shown below, it is incentive compatible for worker $w$ to report truthfully, regardless of whether other workers report truthfully.

Suppose that worker $w$'s type is $\bf u_w$ and other workers report $\bf\hat  u_{-w}$. Let $\mu \equiv \mu^*({\bf  u_w, \hat u_{-w}})$ and $\mu' \equiv \mu^*({\bf  \hat{u}_w, \hat u_{-w}})$. Then, the payoff of worker $w$ from truthful reporting is the worker's marginal product:
\begin{align*}
 p^*(w; \,  {\bf u_w, \hat u_{-w}})- u_w( \mu(w)) & = V(W;\,  {\bf u_w, \hat u_{-w}})-V(W\backslash w;\,  {\bf \hat u_{-w}}) \\
 & = \sum_{f \in F}u_f(\mu(f)) - \sum_{w' \ne w} \hat u_{w'}(\mu(w')) - u_{w}(\mu(w)) - V(W\backslash w;\,  {\bf  \hat u_{-w}})  \\
    &\ge \sum_{f \in F}u_f(\mu'(f)) - \sum_{w' \ne w}\hat u_{w'}(\mu'(w')) - u_{w}(\mu'(w)) - V(W\backslash w;\,  {\bf  \hat u_{-w}}) \\ 
     & = V(W;\,  {\bf \hat u_w, \hat u_{-w}})-V(W\backslash w;\,  {\bf \hat u_{-w}}) +  \hat u_w(\mu'(w)) - u_w(\mu'(w)) \\
     & =p^*(w; \,  {\bf \hat u_w, \hat u_{-w}})- u_w( \mu'(w))
\end{align*}
where the second and third equalities follow from the fact that, by our convention, subsets assigned to each firm~$f$ are in $\calsf$ for the reported disutilities, and the inequality follows from efficiency of $\mu$ at $({\bf  u_w, \hat u_{-w}})$. The last expression is the payoff of worker $w$ from reporting ${\bf \hat{u}_w}$. Hence, truthful reporting is a weakly dominant strategy.

Since the domain of worker types is convex, the smoothly connected assumption of \cite{H79} is satisfied. Hence, the VCG is the only efficient, strategy-proof mechanism. The VCG payments are uniquely determined by the fact that the salary of any unemployed worker is zero.
\hfill$\blacksquare$

\end{document}